\def\Z2{\mathbb Z_2}
\def\Ht{\widetilde{H}}
\def\Hh{\widehat{H}}
\def\lm{l_{\textrm{MIN}}}
\def\dist{\textrm{dist}}
\def\supp{\textrm{support}}
\def\b|{\big \|}
\def\om{\omega}
\def\Eom{\underset{\omega}{\mathbb{E}} }
\theoremstyle{definition}
\newtheorem{Lemma}{Lemma}
\newtheorem{Theorem}[Lemma]{Theorem}
\begin{document}
\title{Dynamics of many-body localized systems: logarithmic lightcones and $\log \, t$-law of $\alpha$-Rényi entropies}

\author{Daniele Toniolo\,\orcidlink{0000-0003-2517-0770}}
\email[]{daniele.toniolo@uni-tuebingen.de}
\email[]{danielet@alumni.ntnu.no}
\affiliation{Department of Physics and Astronomy, University College London, Gower Street, London WC1E 6BT, United Kingdom}
\affiliation{Department of Mathematics, University of Tübingen, Auf der Morgenstelle 10, 72076 Tübingen, Germany}

\author{Sougato Bose\,\orcidlink{0000-0001-8726-0566}}
\affiliation{Department of Physics and Astronomy, University College London, Gower Street, London WC1E 6BT, United Kingdom}


\begin{abstract}
In the context of the Many-Body-Localization phenomenology we consider arbitrarily large one-dimensional local spin systems, the XXZ model with random magnetic field is a prototypical example. Without assuming the existence of exponentially localized integrals of motion (LIOM), but assuming instead that the system's dynamics gives rise to a Lieb-Robinson bound (L-R) with a logarithmic lightcone, we rigorously evaluate the dynamical generation, starting from a generic product state, of $ \alpha$-Rényi entropies, with $ \alpha $ close to one, obtaining a $\log \, t$-law, that denotes a slow spread of entanglement. This is in sharp contrast with Anderson localized phases that show no dynamically generated entanglement. To prove this result we apply a general theory recently developed by us in \cite{Toniolo_2024_2} that quantitatively relates the L-R bounds of a local Hamiltonian with the dynamical generation of entanglement. Assuming instead the existence of LIOM we provide new independent proofs of the known facts that the L-R bound of the system's dynamics has a logarithmic lightcone and show that the dynamical generation of the von Neumann entropy has for large times a $ \log \, t$-shape.
L-R bounds, that quantify the dynamical spreading of local operators, may be easier to measure in experiments in comparison to global quantities such as entanglement.
\end{abstract}

\maketitle


\section{Introduction}

One-dimensional interacting and disordered systems in the many-body localized (MBL) phase  have been regarded as non thermalizing \cite{Basko_2006, Gornyi_2005, Oganesyan_2007, Znidaric_2008, Pal_2010}. Review articles on MBL are for example \cite{Alet_review_2018, Abanin_2019, Sierant_review_2024}. MBL has been experimentally explored both in one \cite{Schreiber_2015,Lukin_2019} and in two dimensions \cite{Choi_2016,Bordia_2017}. Nevertheless the works \cite{De_Roeck_2017_1,Luitz_2017}, see also \cite{Morningstar_Huse_2022,Morningstar_Huse_2023}, started a long term debate about the stability of MBL with respect to the existence of rare ergodic regions. An experimental signature of the so called avalanche scenario has been discussed in \cite{Leonard_2023}.
Other references relevant for the discussion about the stability and the actual existence of genuine MBL phases are \cite{Suntajs_2020,Abanin_2021,Sierant_2020,Panda_2019,Kiefer_2020}.

Recently we have provided in \cite{Toniolo_2024_1} an analysis of the stability of the dynamics of systems with logarithmic lightcones that are relevant to MBL, proving that a local perturbation on top of a system with a slow dynamics (namely a Lieb-Robinson bound with a logarithmic lightcone) to propagate across a region of size $ l $ takes a time proportional to the exponential of such a distance. Numerical works \cite{Sierant_2024,Scocco_2024} studying, among other things, the dynamical effects of joining together a one-dimensional MBL system with an ergodic one or a thermal bath agree with our findings in \cite{Toniolo_2024_1}.

 One of the most prominent distinguishing features of MBL from Anderson localization is the different dynamics of entanglement as measured by quantum entropies. It has been shown by numerical simulation of (small) systems' dynamics, for example in \cite{Znidaric_2008,Bardarson_2012,Serbyn_2013,Huang_2021}, and many other works, and even experimentally in \cite{Lukin_2019}, that the entanglement of a time-evolved product state, measured by the von Neumann entropy of its reduced density matrix, grows in time (for large $t$) as $\log t $. This is in contrast with the behaviour of Anderson localized systems, where dynamical localization implies  that the dynamically generated entanglement does not grow in time \cite{Nach_2016}.

An effective model to explain the MBL phenomenology is the so called local integrals of motion (LIOM) model. This model assumes that a MBL Hamiltonian could be written as a sum of mutually commuting and exponentially localized terms \cite{Serbyn_2013_2, Kim_2014, Chandran_2015, Imbrie_2017, Peng_2019}. The LIOM model explains important features of a MBL, like the the dynamical generation of entanglement entropy from a pure product state as $\log t $. Moreover within the LIOM model it is possible to prove the slow spread of information, quantified by the Lieb-Robinson bounds, see equation \eqref{L-R-beta} below, \cite{Kim_2014,Nach_2021,Lu_2024}. 

The LIOM model is still under debate.  For example in the work \cite{Krajewski_2022} the authors consider perturbations  of the localized integrals of motion of the Anderson model given by interactions, finding with numerics that the component of the perturbations orthogonal to the LIOM of the Anderson model causes the system to thermalize. See also the recent work \cite{Surace_2025} that discusses the stability of the LIOM model at the perturbative level. There isn't so far an accepted proof of the existence of a complete set of exponentially localized integrals of motion for a truly interacting model. The pioneering  work of Imbrie in \cite{Imbrie_2016} relies upon an unproven assumption about a limited level attraction among energy levels of the Hamiltonian.

Different approaches are for example those of \cite{MacCormack_2021}, based on the entanglement contour, and that of the authors of \cite{Pain_2024} that obtained the $ \log \, t$-law of entanglement entropy studying 4-eigenvectors correlators as considered, for example, in the ergodic setting in the work \cite{Hahn_2024}.

Two interesting recent rigorous works are \cite{De_Roeck_2025} and \cite{Lemm_Lucas_2025}. The first one, following general ideas from \cite{Imbrie_2016}, proves MBL for subintervals of size of order $\log L $ of one-dimensional chains, this suffices to prove subdiffusion for a chain of size $L$. In the work \cite{Lemm_Lucas_2025} starting from a non-resonant conditions the authors show, in any dimension, a non-perturbatively small velocity of ballisitc information transport for all many-body states and the existence of a non-perturbatively long time scale with a logarithmically slow spread of entanglement.

In this work assuming a logarithmic lightcone, as defined in equation \eqref{L-R-beta}, but without assuming the LIOM model, we prove in theorem \ref{Renyi_from_L-R}, making use of the theory developed in \cite{Toniolo_2024_2}, a $ \log \, t$-law for the dynamical generation of $ \alpha$-Rényi entropy starting from a generic product state, with $ 0 < \alpha \le 1 $.

The scaling of $ \alpha $-Rényi entropies, $0 \le \alpha < 1 $, of the reduced density matrix of a state, of a one-dimensional system, have been shown to determine whether such a state could be efficiently represented by a matrix product state (MPS) \cite{Verstraete_Cirac_2006, Schuch_2008}. In higher dimensional systems, instead, a non volume law for the Rényi entropy does not imply an efficient MPS representation \cite{Ge_Eisert_2016}. A consequence of our theorem \ref{Renyi_from_L-R} is then that the dynamics of a local one-dimensional system with L-R bound with a logarithm lightcone let an initial pure product state to be easily simulable, under unitary evolution, for time scales of the order of the system's size $L$.

Recently Elgart and Klein \cite{Elgart_2022,Elgart_2023,Elgart_2026} have rigorously studied the $ XXZ $ model with random magnetic field, the prototypical model of choice for investigating the MBL phenomenology. One of their results, equation (2.6) of theorem 2.1 in \cite{Elgart_2026}, establishes for such system a Lieb-Robinson (L-R) bound with a logarithmic lightcone in the low energy sector of the theory. L-R bounds with logarithmic lightcones have also been established in models of fermions in a disorder potential and interacting only on a ``sparse'' set of positions in \cite{Gebert_2022,Toniolo_2024_1,Toniolo_2025_1}. L-R bounds with a log-like lightcone have been previously proven in the context of the LIOM model by \cite{Kim_2014,Nach_2021,Lu_2024}. The logarithmic lightcone of MBL dynamics has been numerically investigated in \cite{Deng_2017} and through the study of out of order time correlators in \cite{Kim_OTOC_2023}. Previous rigorous works on the XXZ model related to localization phenomena are for example \cite{Elgart_2018,Beaud_2018}. An independent derivation of the $ \log t$-law of the dynamical entanglement entropy from the assumption of a logarithmic lightcone in a local one-dimensional spin model has been given in \cite{Zeng_2023}.

It remains an important open problem whether the Lieb-Robinson bound with a logarithmic lightcone of equation \eqref{L-R-beta} of the manuscript could be extended to all energies. This would imply by the very results of our manuscript that there is a rigorous proof of the $\ln t$-law for dynamical generation of entanglement entropy for the XXZ model with random magnetic field.

Our work is organized as follows. Theorem \ref{Renyi_from_L-R} of section \ref{MBL_Rényi} is our main result. In sections \ref{LIOM_Ham} and \ref{EE_from_LIOM} we provide new proofs of established results within the LIOM model. In particular in section \ref{LIOM_Ham} we recollect the definitions of LIOM and the LIOM Hamiltonian. From the LIOM Hamiltonian we prove in lemma \ref{theo_L-R_LIOM}, of section \ref{sub_L-R}, a L-R bound with a logarithmic lightcone. In section \ref{EE_from_LIOM} assuming the LIOM model we prove in lemma \ref{lem_EE_from_LIOM} the $ \log \, t$-law of entanglement entropy. We conclude with section \ref{Discussion} discussing and comparing our results on dynamical entanglement entropy from sections \ref{MBL_Rényi} and \ref{EE_from_LIOM}.



\section{Dynamical generation of $\alpha$-Rényi Entropies in systems with logarithmic lightcones} \label{MBL_Rényi}

We consider a one dimensional lattice $ \Lambda = [-L,L] \cap \mathds{Z} $ of qubits (the on site Hilbert space is $ \mathds{C}^2 $). The theory can be generalized to qudits as well. The system's Hilbert space is $ \mathcal{H} := \left(\mathds{C}^2\right)^{\otimes 2L+1} $.

The $ \alpha $-Rényi entropy, with $ \alpha \neq 1 $, given $ \rho: \mathcal{H} \rightarrow \mathcal{H} $  a state (that means $ \rho \ge 0 $ and $ \Tr \rho = 1 $), is defined as:
\begin{equation}
 S_\alpha (\rho) = \frac{1}{1-\alpha} \log \Tr \rho^\alpha
\end{equation}
It holds $ S_\alpha (\cdot) \le S_\beta (\cdot) $ with $ \alpha \ge \beta $. The von Neumann entropy, $ S(\rho)=-\Tr \rho \log \rho $, is obtained as the $ \alpha \rightarrow 1 $ limit of the $ \alpha $-Rényi entropies (see for example equations 18-19 of \cite{Toniolo_2024_2}). We consider $ \alpha $-Rényi entropies with $ \alpha \le 1 $. This means that our attention is focused on quantum entropies that upper bound the von Neumann entropy.

In this work we use the formalism developed by us in \cite{Toniolo_2024_2} to upper bound 
\begin{equation} \label{delta_Renyi}
 \Delta \, S_{\alpha}(t) := \Eom \left( S_{\alpha}\left( \Tr_{[1,L]}  e^{-it \, H_\om} \rho e^{it\, H_\om} \right) - S_{\alpha}\left( \Tr_{[1,L]} \rho  \right) \right)
\end{equation}
with $ \rho $ a product state $ \rho=\otimes_{j=-L}^L \rho_j $, $ \rho_j : \mathds{C}^2 \rightarrow \mathds{C}^2 $, and a nearest neighbour Hamiltonian $ H_\om := \sum_{r=-L}^{L-1} H_{r,r+1,\om} $, dependent on a parameter $ \om $, that usually refers to disorder. We assume that the dynamics $ U_\om(t):=e^{-itH_\om} $ gives rise to a Lieb-Robinson bound with a logarithmic lightcone. This means that there exist $ K > 0 $, $ \beta > 0 $ and $ \mu > 0 $ (the scaling, or localization, length) all system's size independent, such that for every pair of operators $ A $ and $ B $ defined on $ \mathcal{H} $, $ A,B : \mathcal{H} \rightarrow \mathcal{H} $ with support that for simplicity is assumed connected, it holds:
\begin{align} \label{L-R-beta}
& \Eom \| [  U_\om(t) A U_\om^*(t) , B ] \| \le  K \, \|A\| \, \|B\| \, t^\beta \, e^{-\frac{\dist(\supp(A),\supp(B))}{\mu} }
\end{align}
The operator norm $\|A\|$ is the largest singular value of $A$. Denoting $ X $ an interval containing the support of $ A $ and $ X^c $ the complement of $ X $ in $ [-L,L] $ it holds \cite{Hastings_Locality_2010} that
\begin{align} \label{Haar}
 &\frac{1}{2^{|X^c|}} \left( \Tr_{X^c} U^* A U \right) \otimes \mathds{1}_{X^c}  = \int_{W \textrm{supported on}\, X^c}  \, W^* (U^* A U) W \, d_{Haar}W
\end{align} 
Using \eqref{Haar} it is easy to see that \eqref{L-R-beta} implies
\begin{align} \label{L-R-beta-res}
& \Eom \| \frac{1}{2^{|X^c|}} \left( \Tr_{X^c} U_\om^*(t) A U_\om(t) \right) \otimes \mathds{1}_{X^c} -  U_\om^*(t) A U_\om(t) \| \le  K   \, t^\beta  \, \|A\| \,  e^{-\frac{\dist(\supp(A),X^c)}{\mu} }
\end{align}
In the following we denote $ l:= \dist(\supp(A),X^c) $.

Let us discuss the meaning of a ``logarithmic lightcone'', in the simplest cases when $ \beta=1 $, then  $ K\,t\,   e^{ -\frac{l}{\mu} } =   e^{\log(Kt) -\frac{l}{\mu} } $ the support of $ A $ is $ x=0 $. This means that up to a time $ t $ exponentially large in $ l $ the restriction of $ \Eom U_\om^*(t) A U_\om(t) $ within the region $ [-l,l] $ provides an approximation exponentially good to the 
full dynamics $ \Eom U_\om^*(t) A U_\om(t) $. For example, defining $ K t_{max} = e^{\frac{l}{2\mu}} $, at $ t=t_{max} $ the upper bound in \eqref{L-R-beta} is of the order of $ e^{-\frac{l}{2\mu}} \ll 1 $, when $ l \gg \mu $. For comparison, the L-R bound of an ergodic system $ e^{v_{LR} t -\frac{l}{\eta} } $ is such that the truncation of the dynamics provides a good approximation only within times proportional to $ l $. This implies that the dynamics of a many-body localized system, assuming that the L-R bound \eqref{L-R-beta} holds, is exponentially slower than the ergodic one, but not completely ``frozen'' as it would result from a vanishing L-R velocity, $ v_{LR}=0 $, given in equation \eqref{Anderson_type_in}, as proven in disordered non-interacting systems \cite{Sims_Stolz_2012}, see also \cite{Sims_2016}. $ v_{LR}=0 $  in the single particle context corresponds to the dynamical localization in the Anderson model \cite{Stolz_2011}.

The L-R bound \eqref{L-R-beta} is meaningful within times $ t $ such that the RHS of \eqref{L-R-beta} is smaller than the trivial bound that equals $ 2 \| A \| $. The same is true for \eqref{L-R-beta-res}, where the trivial bound follows, for example, from \eqref{Haar}.

The case $ \beta=0 $ in \eqref{L-R-beta} is relevant with respect to Anderson localization. 
The authors of \cite{Sims_Stolz_2012}, see their corollary 4.1, showed that for the dynamics of the $ XY $ model with random magnetic field, that is equivalent to a non-interacting many-particle Anderson model, the following holds: there are parameters $ c, \mu > 0 $ such that
\begin{align} \label{Anderson_type_in}
&  \Eom \sup_t \| \frac{1}{2^{|X^c|}} \left( \Tr_{X^c} U_\om^*(t) A U_\om(t) \right) \otimes \mathds{1}_{X^c} -  U_\om^*(t) A U_\om(t) \|  \le  c  \, \|A\| \,  e^{-\frac{l}{\mu} }
\end{align}
Note the supremum with respect to $ t $ on the LHS of \eqref{Anderson_type_in}, and also that both the left-hand sides of \eqref{Anderson_type_in} and \eqref{L-R-beta} involve an average over all realisations of the Hamiltonian $H_\om$. We will refer to \eqref{Anderson_type_in} as Anderson-type localization. For  the $ XY $ model with random magnetic field in the work \cite{Nach_2016} it has been established an upper bound on $ \Delta \, S(t) $, with the supremum over time taken before $ \Eom $, like in \eqref{Anderson_type_in}, that is time-independent. We also like to mention the works \cite{Pastur_2014,Elgart_2016,Pastur_2018} that studied the entanglement properties of the eigenstates of the Anderson model in various dimensions, in particular \cite{Pastur_2018} established the absence of self-averaging for the entanglement entropy of the reduced Fermi projection of the Anderson model in a one-dimensional lattice.

It can be argued that a system displaying Anderson-type localization, as defined in \eqref{Anderson_type_in} would also satisfy \eqref{L-R-beta}, at least for $ t $ large enough. This implies that a genuine MBL phase cannot simply be defined by \eqref{L-R-beta}. It is clear that, in this setting, to distinguish among Anderson-type localization and MBL a lower bound to the Lieb-Robinson commutator (namely the LHS of \eqref{L-R-beta}), or to its time-average till time $t$, increasing  in $ t $ (and compatible with the upper bound) is needed. We think that some type of delocalization of information must be proven to actually distinguish a genuine many-body localized phase from Anderson localization.

 The authors of \cite{Linden_2009} found out that every many-body system with an energy spectrum without degeneracies and without degenerate energy gaps equilibrates. Later on the authors of  \cite{Short_2012} partially lift the condition on the absence of degenerate eigenvalues and degenerate gaps.
Models displaying Anderson-type localization do not have such a spectrum, at least in the thermodynamic limit \cite{Minami_1996,Stolz_2011,Sims_Stolz_2012}. It has been recently claimed in \cite{Huang_2021} that the spectral conditions assumed in \cite{Linden_2009} generically holds for local interacting Hamiltonians, namely that the set of systems that violate these spectral features have in the space of parameters, that define the model, zero measure.

A remark about units. We are working in natural units, $ \textrm{energy} \times \textrm{time}=1 $, therefore $ K $ in \eqref{L-R-beta} is such that $ K t^\beta $ is dimensionless. All the distances are measured in units of the lattice spacing, that is taken equal to $ 1 $, in fact $ \Lambda := [-L,L] \cap \mathds{Z} $, therefore $ \mu $ is dimensionless.

The input of the theory developed by us in \cite{Toniolo_2024_2} is the Lieb-Robinson bound of the dynamics, even in averaged form, as in \eqref{L-R-beta}. The output is an upper bound on the dynamical Rényi entropy, as defined in \eqref{delta_Renyi}, starting from a generic product state $ \rho$. The $\alpha$-Rényi entropy with $0 \le\alpha  <1$ is an important quantity because it characterizes the states that can be efficiently approximated by matrix product states \cite{Verstraete_Cirac_2006, Schuch_2008}.

It is important to mention that Elgart and Klein \cite{Elgart_2023,Elgart_2022} rigorously obtained for the  $XXZ$ model, with strong enough on site disorder and in the low energy sector of the theory, a Lieb-Robinson bound as in \eqref{L-R-beta} but with an extra factor dependent on the system's size $ L^{\xi_E} $. For the precise statement of their bound we refer to equation 1.1 on page 3 of their work \cite{Elgart_2023}. In this case the meaning of the logarithmic lightcone discussed before requires the additional condition $ l \gg \ln L $.




\begin{Theorem} \label{Renyi_from_L-R}
Given a local Hamiltonian, for simplicity nearest neighbour $ H_\om := \sum_{r=-L}^{L-1} H_{r,r+1,\om} $, where the dependence of $ H_{r,r+1,\om} $ on $ \om $ is through a scalar function $ f $, $H_{r,r+1,\om} = A_{r,r+1} + f(\om_r,\om_{r+1}) B_{r,r+1}$, see the example \eqref{example}, and with $ J := \max_{r} \|A_{r,r+1} \| + \max_{r,\om}|f(\om_r,\om_{r+1})| \max_r \|B_{r,r+1} \| $, that is assumed to give rise to a Lieb-Robinson bound with a logarithmic lightcone, as in \eqref{L-R-beta}, then the variation of $ \alpha$-Rényi entropy \eqref{delta_Renyi}, starting from any initial product state $ \rho $, is upper bounded for long times, with $ \frac{\ln2}{\ln2 + 1/\mu} < \alpha \le 1 $, and $ \eta > 1 $, as follows:
\begin{align}\label{delta_Renyi_time}
 \lim_{t'\rightarrow \infty} \lim_{L\rightarrow \infty} \frac{ \Eom \, \Delta S_\alpha (t')}{\ln t'} \le \eta \frac{\beta +2}{\frac{1}{\mu}-\frac{1-\alpha}{\alpha}\ln2  }
\end{align}
$ t'$ is a (dimensionless) rescaled time, $ t':= \left( \frac{8KJ^2}{(\beta +1)(\beta+2)} \right)^{\frac{1}{\beta +2}}  t $. The deviation of $ \Delta S_\alpha (t') $ from $ \ln t' $ decreases as a power law in $ t' $, see equation \eqref{30}.
\end{Theorem}

Before the proof let us provide some remarks.

We stress that the inputs needed to obtain \eqref{delta_Renyi_time} are the locality of the Hamiltonian  and its L-R bound in the form given by \eqref{L-R-beta}. This result does not depend on the explicit form of the Hamiltonian, moreover we are not assuming the existence of a complete set of localized integrals of motion (LIOM). In Lemma \ref{lem_EE_from_LIOM} of section \ref{EE_from_LIOM}, we obtain for $ \Delta S_\alpha (t') $ the same $ \ln t' $ scaling assuming instead the existence of LIOM.

As mentioned in the introduction theorem \ref{Renyi_from_L-R} implies that the initial pure product state is easily simulable, under unitary evolution, for time scales of the order of the system's size $L$.
More precisely: if there is a $0 \le \alpha < 1 $ such that the reduced density matrix of a pure state $ |v\rangle $ satisfies $ S_\alpha(\Tr_{[1,L]} |v\rangle \langle v| ) \le O(\log L) $ then the bond dimension (D) of the MPS representation of $ |v\rangle $ is of order $ \textrm{poly}(L) $, meaning that such representation is efficient anf the state is easily (classically) simulable.

Our result \eqref{delta_Renyi_time} could be generalized to Hamiltonians with exponential decay of interactions following appendix G of \cite{Toniolo_2024_2}. The meaning of the dependence of the bound \eqref{delta_Renyi_time} on $ \mu $ is straightforward: the smaller $ \mu $ the slower the dynamics, in fact the time scale for the spread of the support of an operator in a system with a logarithmic lightcone is proportional to $ e^{\frac{l}{\mu}} $. This corresponds to a slower growth of entropy, in fact the RHS of \eqref{delta_Renyi_time} decreases with $ \mu $ decreasing.

\begin{proof}
This proof starts with the following equation from the introduction of \cite{Toniolo_2024_2}, for the details see section IV of \cite{Toniolo_2024_2}. For now we consider $ H $ independent from $ \om $. We go back to it in equation \eqref{def_Delta} where we explain how the average $ \Eom $ is taken into account. 
Defining $ H_{\Lambda_k} := \sum_{j=-k}^{k-1} H_{j,j+1} $, and 
\begin{align} \label{def_Delta_gen}
  \Delta_k(t) & := \big \| e^{is(H_{\Lambda_{k+1}}-H_{0,1})}H_{[0,1]}e^{-is(H_{\Lambda_{k+1}}-H_{[0,1]})} - e^{is(H_{\Lambda_{k}}-H_{[0,1]})}H_{[0,1]}e^{-is(H_{\Lambda_{k  x}}-H_{[0,1]})}  \big \| \\
  & \le  \int_0^t ds  \big \| \big[(H_{k,k+1}+H_{-k-1,-k}) \,,\, e^{is(H_{\Lambda_k}-H_{[0,1]})}H_{[0,1]}e^{-is(H_{\Lambda_k}-H_{[0,1]})} \big] \big \| 
\end{align}
according to \cite{Toniolo_2024_2}, where $ l $ is a distance to be determined by minimization, it holds
\begin{align} \label{bound_sum_1}
  \Delta S_\alpha & (t) \le  \frac{1}{1-\alpha} \sum_{k=l+1}^{L} \log \Big [ 1- \alpha \int_0^t ds \, \Delta_{k-1}(s) + (2^{k+1}-1)^{1-\alpha} \left(\int_0^t ds \, \Delta_{k-1}(s)\right)^\alpha \Big   ] + l +1 
\end{align}
Let us sketch how \eqref{bound_sum_1} is obtained.

The main idea for the evaluation of the dynamically generated entanglement is, according to \cite{Eisert_Osborne_2006, Osborne_2006}, to replace the unitary evolution $ e^{-itH} $ with $ V(t):= e^{it(H_L+H_R)} e^{-itH} $, having introduced the ``left`` and ''right`` Hamiltonians $ H_L:= \sum_{k=-L}^{-1} H_{k,k+1} $ and $ H_R:= \sum_{k=1}^{L-1} H_{k,k+1} $. Neither $ H_L $ nor $ H_R $ contain the term $ H_{0,1} $ connecting the two halves $ [-L,0] $ and $ [1,L] $ of the system's bipartition. This leaves the dynamical entropy invariant, in fact:
\begin{align}
& S_\alpha \big( \Tr_{[1,L]} V(t) \rho V^*(t) \big)   = S_\alpha  \big( \Tr_{[1,L]} e^{it (H_L+H_R)} e^{-itH} \rho e^{itH} e^{-it (H_L+H_R)}  \big)   \\
& = S_\alpha \big( \Tr_{[1,L]} e^{it H_L} e^{itH_R} e^{-itH} \rho e^{itH} e^{-itH_R} e^{-it H_L}  \big)  =S_\alpha \big[ e^{it H_L} \big(  \Tr_{[1,L]}  e^{itH_R} e^{-itH} \rho e^{itH} e^{-itH_R}  \big) e^{-it H_L} \big] \label{V11}  \\
&=S_\alpha \big(  \Tr_{[1,L]}  e^{itH_R} e^{-itH} \rho  e^{itH} e^{-itH_R} \big) =S_\alpha \big(  \Tr_{[1,L]} e^{-itH} \rho e^{itH} \big) \label{V13} 
\end{align}
The physical intuition about $ V(t) $ is that despite being supported on the whole lattice, at short times only the terms in the vicinity of the lattice site $ x=0 $ are relevant, therefore $ V(t) $ can be approximated, up to a small error, by a factor with the same structure but Hamiltonians $ H_L $, $H_R$ and $ H $ restricted to the interval $ [-(L-1),L-1] $. This suggests to upper bound $ S_{\alpha}\left( \Tr_{[1,L]}  e^{itH} \rho e^{-itH} \right) - S_{\alpha}\left( \Tr_{[1,L]} \rho  \right) $ with a telescopic sum, where each pair of terms is in turn upper bounded using the general, and tight, inequality proven by Audenaert, see equation A3 of \cite{Audenaert_2007}. We have employed a slightly weaker form of Audenaert's bound given below in \eqref{Audenaert_Datta}. With  $ T := \frac{1}{2}\|\rho-\sigma\|_1 $ the trace distance among two states $ \rho $ and $ \sigma $, $ d $ the dimension of the Hilbert space where the states are acting upon, it holds:
\begin{equation} \label{Audenaert_Datta}
  |S_\alpha(\rho)-S_\alpha(\sigma)| \le 
\frac{1}{1-\alpha} \log ( 1-\alpha T + (d-1)^{1-\alpha} T^\alpha) 
\end{equation}
 The bound \eqref{Audenaert_Datta} is increasing in $ T $, therefore it still holds true replacing $T$ with $R$, being  $ T \le R \le 1 $. This will be useful in the following because we can only provide upper bounds to the trace distance $T$. 
 
In our theory the state $ \rho $ appearing in \eqref{delta_Renyi} is crucially assumed being a product state $ \rho=\otimes_{j=-L}^L \rho_j $, with $ \rho_j : \mathds{C}^2 \rightarrow \mathds{C}^2 $. Nevertheless the theory could also be extended low entangled states, see Lemma 3 of \cite{Toniolo_2024_2}. 
The telescopic sum stops with the restriction of the Hamiltonian to the interval $ [-l,l] $. The whole sum in \eqref{bound_sum_1} is then supposed to be minimized with respect to $ l $. This minimization has been performed in \cite{Toniolo_2024_2} in the case of a linear lightcone.
Equation \eqref{bound_sum_1} then arises from the application of \eqref{Audenaert_Datta} to each pair of terms of the telescopic sum where an upper bound to the trace distance is obtained making use of L-R bounds.
In this proof the minimization of the telescopic sum is performed explicitly, having assumed the L-R bounds \eqref{L-R-beta}, in equations \eqref{27}-\eqref{l_min} below.

Before that we explicitly introduce the ''disorder`` parameter $ \om $. The definition of $ \Delta S_\alpha(t) $ in \eqref{delta_Renyi} includes the average over $ \om $. This average is brought inside the $ \log $ in equation \eqref{bound_sum_1} first, and then under the power to the $ \alpha $ in the last term of the argument of the $ \log $ using the the Jensen inequality, in fact both the logarithm and the power $ 0< \alpha \le1$ are concave functions. The Jensen inequality for concave functions reads \cite{Rudin_Principles}: given $ p_j \in (a,b) $, with $ p_j > 0 $ and $ 1 = \sum_j p_j $, if $ f $ is concave in $ (a,b) $, given $ x_j \in (a,b) $ it holds:
\begin{align} \label{Jensen}
 \sum_j p_j f(x_j) \le f \left( \sum_j p_j x_j \right)
\end{align}
In our work the ''weights`` $ p_j $ are hidden in the normalized average $ \Eom $. The average over $ \om $ is such that the RHS of \eqref{def_Delta_gen}, also with $ k \rightarrow k-1 $, is replaced with:
\begin{align} \label{def_Delta}
 & \Delta_{k-1}(t) \le  \int_0^t ds \Eom \big \| \big[(H_{k-1,k,\om}+H_{-k,-k+1,\om}) \,,\, e^{is(H_{\Lambda_{k-1},\om}-H_{[0,1],\om})}H_{[0,1],\om}e^{-is(H_{\Lambda_{k-1},\om}-H_{[0,1],\om})} \big] \big \| 
\end{align}
We are now in the position to use the L-R bound as in equation \eqref{L-R-beta} once we have taken into account that the operators in the commutator on the RHS of \eqref{def_Delta} are disorder-dependent. We recollect the assumption about the scalar dependence of $ H_{k,k+1,\om} $ on $ \om $:
\begin{align} \label{scalar_dis}
 H_{k,k+1,\om} = A_{k,k+1} + f(\om_k,\om_{k+1}) B_{k,k+1}
\end{align}
where $ A_{k,k+1} $ and $ B_{k,k+1} $ are supported on the sites $ k $ and $ k+1 $. 
In the notable example of the $ XY $ model with random magnetic field \cite{Sims_Stolz_2012}, it is:
\begin{align} \label{example}
\begin{cases}
 & A_{k,k+1} = \sigma_k^x \sigma_{k+1}^x + \sigma_k^y \sigma_{k+1}^y  \\
 &  f(\om_k,\om_{k+1}) B_{k,k+1} = \omega_k \sigma_k^z + \omega_{k+1} \sigma_{k+1}^z
\end{cases}
\end{align}
Then, defining $ U_{\Lambda_{k},\om}(s) := e^{-is(H_{\Lambda_{k},\om}-H_{[0,1],\om})} $, we have
\begin{align}
& \Eom \| \big[H_{k-1,k,\om} \,,\, U_{\Lambda_{k},\om}^*(s)H_{[0,1],\om}U_{\Lambda_{k},\om}(s) \big] \| \\
& \le \Eom \| \big[A_{k-1,k} \,,\, U_{\Lambda_{k},\om}^*(s)A_{[0,1]}U_{\Lambda_{k},\om}(s) \big] +  f(\om_{k-1},\om_{k})  \big[B_{k-1,k} \,,\, U_{\Lambda_{k},\om}^*(s)A_{[0,1]}U_{\Lambda_{k},\om}(s) \big] \| \nonumber \\
& + \Eom \| f(\om_{0},\om_{1})  \big[A_{k-1,k} \,,\, U_{\Lambda_{k},\om}^*(s)B_{[0,1]}U_{\Lambda_{k},\om}(s) \big] + f(\om_{k-1},\om_{k}) f(\om_{0},\om_{1})  \big[B_{k-1,k} \,,\, U_{\Lambda_{k},\om}^*(s)B_{[0,1]}U_{\Lambda_{k},\om}(s) \big] \|  \\ 
& \le \Eom \| \big[A_{k-1,k} \,,\, U_{\Lambda_{k},\om}^*(s)A_{[0,1]}U_{\Lambda_{k},\om}(s) \big] \| \nonumber \\
& +  \max_{k\in[-L,L],\om}|f(\om_{k-1},\om_{k})|  \Eom \|  \big[B_{k-1,k} \,,\, U_{\Lambda_{k},\om}^*(s)A_{[0,1]}U_{\Lambda_{k},\om}(s) \big] \| \nonumber \\
& + \max_\om |f(\om_{0},\om_{1})| \Eom \|  \big[A_{k-1,k} \,,\, U_{\Lambda_{k},\om}^*(s)B_{[0,1]}U_{\Lambda_{k},\om}(s) \big] \| \nonumber \\
& + \max_{k\in[-L,L],\om}|f(\om_{k-1},\om_{k}) f(\om_{0},\om_{1}) | \Eom \| \big[B_{k-1,k} \,,\, U_{\Lambda_{k},\om}^*(s)B_{[0,1]}U_{\Lambda_{k},\om}(s) \big] \|  \\ 
& \le 4K \, t^\beta \, J^2  \,  e^{ -\frac{k-2}{\mu} } \label{16}
 \end{align}
In \eqref{16} we have used the fact that $ \dist(\supp(H_{[k-1,k],\om}),\supp H_{[0,1],\om}) = k-2 $ and the definition of $ J $ in \ref{Renyi_from_L-R}. The contribution from the term $ H_{[-k,-k+1],\om} $ in \eqref{def_Delta} is the same as in \eqref{16}. Performing the time integral in the RHS of \eqref{def_Delta}, we obtain:
\begin{align} \label{up_Delta}
 \Delta_{k-1}(t) \le \frac{8KJ^2}{\beta +1}t^{\beta +1}  e^{-\frac{k-2}{\mu}} 
\end{align}
Inserting \eqref{up_Delta} into \eqref{bound_sum_1} and defining the (dimensionless) rescaled time $ t'$
\begin{equation} \label{t'_def}
 t':= \left( \frac{8KJ^2}{(\beta +1)(\beta+2)} \right)^{\frac{1}{\beta +2}}  t
\end{equation}
we get
\begin{align} \label{bound_sum_2}
 \Eom \, \Delta S_\alpha (t') & \le \frac{1}{1-\alpha} \sum_{k=l+1}^{L} \log \Big [ 1- \alpha t'^{\beta +2}  e^{-\frac{k-2}{\mu}}  + (2^{k+1}-1)^{1-\alpha} (t'^{\beta +2}  e^{-\frac{k-2}{\mu}})^\alpha \Big ] + l +1 
\end{align}
Rewriting $ 2^{k(1-\alpha)}  e^{-\frac{k \alpha}{\mu}} = e^{ \left((1-\alpha)\ln 2 -\frac{\alpha}{\mu} \right)k}$, it follows that the convergence of the series in \eqref{bound_sum_2}, in the limit $ L \rightarrow \infty $, is ensured by: 
\begin{equation} \label{summ_cond}
 (1-\alpha)\ln2 - \frac{\alpha}{\mu} < 0 \hspace{5mm} \Rightarrow \hspace{5mm}  \frac{\ln 2}{\ln 2 + \frac{1}{\mu}} <  \alpha \le 1 
\end{equation}
At this point we adopt a different approach with respect to the one we employed in \cite{Toniolo_2024_2}, that was based on the approximate solution of the system given in equation (39) of \cite{Toniolo_2024_2}. We will explicitly find an upper bound to $ \Eom \, \Delta S_\alpha (t') $ summing up the series in \eqref{bound_sum_2}, and then discussing the minimization with respect to $ l $ as a function of time.

Defining $ \lambda := \frac{\alpha}{\mu} - (1-\alpha)\ln2 $ and then changing the variable of the sum into $ k:= n+l $, it is:
\begin{align} 
 \Eom \, \Delta S_\alpha (t') & \le \frac{1}{1-\alpha} \sum_{k=l+1}^{\infty}  \Big [ - \alpha t'^{\beta +2} e^{\frac{2}{\mu}}  e^{-\frac{k}{\mu}}  + 2^{1-\alpha} e^{\frac{2\alpha}{\mu}} t'^{\alpha(\beta +2)}  e^{-\lambda k} \Big ] + l +1 \label{27} \\
 & = \frac{1}{1-\alpha} \sum_{n=1}^{\infty}  \Big [ - \alpha t'^{\beta +2} e^{\frac{2}{\mu}}  e^{-\frac{n+l}{\mu}}  + 2^{1-\alpha} e^{\frac{2\alpha}{\mu}} t'^{\alpha(\beta +2)}  e^{-\lambda (n+l)} \Big ] + l +1 \label{28} \\
 & = \frac{1}{1-\alpha} \Big [ - \alpha t'^{\beta +2} e^{-\frac{l-2}{\mu}}  \frac{1}{e^{\frac{1}{\mu}}-1}  + 2^{1-\alpha} e^{\frac{2\alpha}{\mu}} t'^{\alpha(\beta +2)}  e^{-\lambda l} \frac{1}{e^{\lambda}-1} \Big ] + l +1  \label{30} 
\end{align}
A simple way of minimizing, at least for large $ t' $, the upper bound in \eqref{30} with respect to $ l $ is to ask which is the ''smallest`` $ l $, as a function of $ t' $, that ensures the term inside the parenthesis in \eqref{30} to go to zero for $ t' \rightarrow \infty $. Then the term $ l $ outside of the parenthesis in \eqref{30} will determine the long time behaviour of $ \Eom \, \Delta S_\alpha (t') $.
Asking that 
\begin{align} \label{condition_asym}
\lim_{t' \rightarrow \infty} t'^{\alpha(\beta +2)}  e^{-\lambda l} \rightarrow 0 
\end{align}
also ensures that  $ \lim_{t' \rightarrow \infty} t'^{\beta +2} e^{-\frac{l}{\mu}} \rightarrow 0 $, in fact it is $ \frac{\lambda}{\alpha} < \frac{1}{\mu} $.
Then, using $ \lambda := \frac{\alpha}{\mu} - (1-\alpha)\ln2 $, we choose
\begin{align} \label{l_condition}
 l >  \frac{\beta +2 }{\frac{1}{\mu} - \left(\frac{1}{\alpha} - 1 \right)\ln2  } \ln t'
\end{align}
leading, with $ \eta > 1 $, to the definition
\begin{align} \label{l_min}
 \lm := \eta \frac{\beta +2 }{\frac{1}{\mu} - \left(\frac{1}{\alpha} - 1 \right)\ln2  } \ln t'
\end{align}
Other possible choices for $ l(t') $ could have been a power law in $ t' $, that satisfies \eqref{condition_asym} but gives a worse bound than the one associated with \eqref{l_min}, in fact $ \forall a > 0 $, $ \lim_{x \rightarrow \infty} \frac{x^a}{\ln x} = \infty $. On the other hand a choice of $ l(t') $ with a slower asymptotic increase at $ \infty $ than $ \ln t' $, like $ (\ln t')^b $, with $ 0 < b < 1 $, because of the factor $ t'^{\alpha(\beta+2)} $ would give again a worse upper bound than $ \ln t' $.

We can conclude that with $ \frac{\ln 2}{\ln 2 + \frac{1}{\mu}} < \alpha < 1 $:
\begin{align} 
 \lim_{t'\rightarrow \infty} \lim_{L\rightarrow \infty} \frac{\Eom \,\Delta S_\alpha (t')}{\ln t'} \le \eta \frac{\beta +2}{\frac{1}{\mu}-\frac{1-\alpha}{\alpha}\ln2  }
\end{align}

 The power law correction, that goes to zero with $t' \rightarrow \infty $, given by the  terms inside the parenthesis in \eqref{30}, with $ l = \lm $ as in equation \eqref{l_min}, is given by:
\begin{align} \label{correction}
 \frac{1}{1-\alpha} \Big [ - \alpha \frac{e^{\frac{2}{\mu}}}{e^{\frac{1}{\mu}}-1} t'^{-(\beta +2)\left( \frac{\alpha \eta}{\mu \lambda}-1 \right)} + 2^{1-\alpha} \frac{e^{\frac{2\alpha}{\mu}}}{e^{\lambda}-1} t'^{-\alpha(\beta +2)(\eta-1)} \Big ]
\end{align}
We remark that $\frac{\alpha}{\mu \lambda} > 1 $. We see that despite having chosen $ \eta > 1 $, fixing $ \eta = 1 $ would simply correspond in \eqref{correction} to a constant contribution (from the second term) that with $t' \rightarrow \infty $ is still sub-leading with respect to $\ln t'$.

Let us discuss the limit $ \alpha \rightarrow 1 $ in \eqref{30}.
The factor in between square brackets in \eqref{30} is a differentiable function of $ \alpha $ and is vanishing with $ \alpha = 1$. This means that the limit $ \alpha \rightarrow 1 $ of the first term in \eqref{30} equals minus the first derivative in $ \alpha = 1$ of such function. When $ l $ is replaced with $ \lm $ as in \eqref{l_min} an explicit calculation reveals that the limit $ \alpha \rightarrow 1 $ of the first term of \eqref{30} is decreasing in $ t' $ as a power law, this means that the long $ t' $ behaviour of  $ \Eom \, \Delta S_\alpha (t') $ is again given by $ \ln(t') $. In particular we have that the limits $ \alpha \rightarrow 1 $ and $ t' \rightarrow \infty $ exchange themselves.

\begin{align} 
 \lim_{t'\rightarrow \infty} \lim_{\alpha \rightarrow 1} \lim_{L\rightarrow \infty} \frac{\Eom \,\Delta S_\alpha (t')}{\ln t'} = \lim_{\alpha \rightarrow 1} \lim_{t'\rightarrow \infty}  \lim_{L\rightarrow \infty} \frac{\Eom \,\Delta S_\alpha (t')}{\ln t'} \le \lim_{\alpha \rightarrow 1} \eta \frac{\beta +2}{\frac{1}{\mu}-\frac{1-\alpha}{\alpha}\ln2  } = \eta \mu(\beta +2)
\end{align}

\end{proof}


\section{Local Integral of Motion Hamiltonian} \label{LIOM_Ham}

Given any local Hamiltonian of the type $ H_\om=\sum_j h_{j,\om} $ with $ \{ h_{j,\om} \}$ supported on a finite and system size independent region around $ j $, following \cite{Kim_2014} we can rewrite this Hamiltonian as a sum of commuting operators $\{ H_{j,\om} \} $: 
\begin{align} \label{cons_comm}
 H_{j,\om} := \lim_{T \rightarrow \infty} \frac{1}{T} \int_0^T dt e^{itH_\om} h_{j,\om} e^{-itH_\om}
\end{align}
$ H_{j,\om} $ are, in principle, supported on the all system. It is easy to see that
\begin{align} \label{local_LIOM}
 H_\om = \lim_{T \rightarrow \infty} \frac{1}{T} \int_0^T dt e^{itH_\om} H_\om e^{-itH_\om} = \lim_{T \rightarrow \infty} \frac{1}{T} \int_0^T dt e^{itH_\om} \sum_j h_{j,\om} e^{-itH_\om} = \sum_j H_{j,\om}
\end{align}
and that $ [H_{j,\om},H_{k,\om}]=0$. This follows, for example, from evaluating the average of a given $ H_j $ with respect to two eigenstates $ \psi_a $ and $ \psi_b $ of the full Hamiltonian $ H_\om $ corresponding to two distinct energies $ E_a $ and $ E_b $:
\begin{align}
 \langle \psi_a, H_{j,\om} \psi_b \rangle = \lim_{T \rightarrow \infty} \frac{1}{T} \int_0^T dt e^{it(E_a-E_b)} \langle  \psi_a , h_{j,\om}  \psi_b \rangle = \delta(E_a,E_b) \langle  \psi_a , h_{j,\om}  \psi_b \rangle
\end{align}
Assuming that the eigenvalues of $ H_\om $ are all distinct (recently it has been claimed in \cite{Huang_2021} that, for a generic local Hamiltonian, it is actually true more namely that also the energy gaps are non degenerate) this shows that the basis $ \{\psi_a\} $ diagonalizes each $ H_{j,\om} $. The $ \{ H_{j,\om} \} $ then share a complete basis of eigenvectors, therefore they commute with each other, and in turn with $ H_\om $.

The crucial assumption made within the theory of LIOM (local integrals of motion) is claiming that each $ H_{j,\om} $ is exponentially localized in an averaged sense \cite{Chandran_2015}.
In the literature other methods have been used to introduce LIOM, see for example \cite{Peng_2019}, and the discussion in \cite{Lu_2024} and references therein.

The LIOM Hamiltonian is:
\begin{align} \label{LIOM}
 H_\om = \sum_{r=-L}^L H_{r,\om}
\end{align}
The $\{ H_{r,\om} \} $ are supported on the whole lattice, but with exponential tails. Let us formalize this as follows. $ H_{r,\om}: \left(\mathds{C}^2\right)^{\otimes 2L+1} \rightarrow \left(\mathds{C}^2\right)^{\otimes 2L+1} $. We will refer to $ H_{r,\om} $ as ``centered'' in $ r $. Along the lines of the previous section \ref{MBL_Rényi}, where we were considering a nearest neighbour Hamiltonian, we define $ J $ as the maximal local energy, $ J := \max_\om \max_r \| H_{r,\om} \| $. The set $\{ H_{r,\om} \} $ has two properties:
\begin{itemize}
 \item $ [H_{r,\om},H_{s,\om}]=0 $
 \item Exponential ``tails''. With $ r $ inside the region $ X $, that for simplicity we assume connected, it holds:
 \begin{equation} \label{decay}
\Eom \, \| \frac{1}{2^{|X^c|}} \left( \Tr_{X^c} H_{r,\om} \right) \otimes \mathds{1}_{X^c} - H_{r,\om} \| \le  J \,  e^{ -\frac{\textrm{dist}(r,X^c)}{\xi} }
\end{equation}
This means that restricting the support of $ H_{r,\om} $ to the region $ X $, with $ r \in X $, the difference with $ H_{r,\om} $, in norm, decreases exponentially with the distance of $ r $ from the complement of $ X $. $ X^c $ denotes the complement of $ X $ in $ [-L,L] $. In few words: the larger the distance of $ r \in X $ from the edge of $ X $, the better the approximation of $ H_r $ with the restriction of $ H_{r,\om} $ to the region $ X $. $ \Eom $ denotes averaging with respect to all possible realisations of the Hamiltonian associated to different disorder configurations.
\end{itemize}

The authors of \cite{Sims_Stolz_2012} showed that for the dynamics of the disordered $ XY $ model, that is equivalent to a non-interacting many-particle Anderson model, the following holds. Given a local observable $O$ with support at the origin of the lattice, let $ (e^{iH_\om t} O e^{-iH_\om t})_l $ be the restriction of $ e^{iH_\om t} O e^{-iH_\om t} $ onto the ball of radius $ l>0 $ around the origin.
There are parameters $ c, \mu > 0 $ such that
\begin{align}\label{Anderson_type}
 \Eom  \sup_t \| (e^{iH_\om t} O e^{-iH_\om t})_l - e^{iH_\om t} O e^{-iH_\om t} \| \le c \, e^{-\frac{l}{\mu}}\ ,  
\end{align}
for all $l>0$.

Note that the left-hand side of \eqref{Anderson_type} involves an average over all realisations of the Hamiltonian $H_\om$, which usually corresponds to  different realisations of a random potential. We will refer to \eqref{Anderson_type} as Anderson-type localization.


\section{Logarithmic Lightcone in the LIOM model} \label{sub_L-R}

\begin{Lemma} \label{theo_L-R_LIOM}
Given the LIOM Hamiltonian as defined in \ref{LIOM_Ham}, the dynamics $ U_\om(t)=e^{-it\sum_{r=-L}^L H_{r,\om}} $ gives rise to the following Lieb-Robinson bound. Considering  an operator $ A $, for simplicity supported on the site $ x=0 $, and with $ X=[-l,l] $, it holds:
\begin{align} \label{L-R}
\Eom \, \| \frac{1}{2^{|X^c|}} \left( \Tr_{X^c} U_\om^*(t) A U_\om(t) \right) \otimes \mathds{1}_{X^c} -  U_\om^*(t) A U_\om(t) \| \le  16   \, t \, J \, \xi \, \|A\| \,  e^{  -\frac{l}{2\xi} }
\end{align}
More in general, for a generic support of $ A $, $l$ on the RHS of \eqref{L-R}  is replaced by $ \textrm{dist}(\textrm{supp}(A),X^c) $.
\end{Lemma}

Before the proof we have some remarks.

The fact that the LIOM model gives rise to a logarithmic light cone, as in \eqref{L-R}, implies, because of theorem \ref{Renyi_from_L-R}, that the average dynamical entanglement of any initial product state grows at most as $ \ln \, Jt $.

The factor $ t $ in the RHS of \eqref{L-R} emerges from the application of the Duhamel identities \eqref{Duhamel}, as will become apparent in the proof. The time-dependence of the entanglement entropy in \eqref{EE} again emerges from the application of the Duhamel identities.


Let us consider the limit $ \xi \rightarrow 0 $. This corresponds, according to equation \eqref{decay}, to the set of operators $ \{ H_{r,\om} \} $  being strictly local, meaning that each $ H_{r,\om} $ is supported over a finite region around the point $ r $. This makes the RHS of \eqref{decay} exactly zero. Moreover it is immediate to realize that strictly local $ \{ H_{r,\om} \} $ together with $ [H_{r,\om},H_{s,\om}]=0 $ implies that in the LHS of \eqref{L-R}, whenever $ l $ is large enough such that $ [-l,l] $ contains the supports of the $ H_{r,\om} $'s whose supports contain $ x=0 $, is exactly zero. We call this trivial localization, in fact it is a form of localization without dynamics. In trivial localization there is not a competition among a ``kinetic'' part and a ``potential'' part of the Hamiltonian, to use a language akin to particles' systems. In the work \cite{Farshi_2022} trivial localization refers to the case where in a quantum circuit a 2-qubits gate appears as the product of two single-qubit gate, trivially preventing the spread of the support of an operator.

Let us consider the case $ \xi \ll 1 $. We could think that this would lead to Anderson-type localization. Nevertheless we see that no matter how small is $ \xi $ in \eqref{L-R}, if we take the supremum with respect to time, as we do in the LHS of \eqref{Anderson_type}, then the RHS of \eqref{L-R} immediately saturates to the trivial bound. This means that the theory of localization emerging from the definition of the LIOM Hamiltonian, as far as regards lemma \ref{theo_L-R_LIOM}, does not include, in a meaningful way, Anderson-type localization for many-body systems, according to definition \eqref{Anderson_type}. This inclusion could be achieved, for example, replacing $ t $ with $ t^{\beta} $, with $ \beta \rightarrow 0 $ when $ \xi \rightarrow 0 $, or alternatively when the disorder becomes dominant. 


In the limit $ \xi \gg 1 $ the LIOM model loses its meaning as the description of a local physical system of interacting particles (spins) subjected to disorder, nevertheless becoming an all to all model we expect the Lieb-Robinson bound to become saturated at all times, this is granted by the factor $ \xi $ on the RHS of \eqref{L-R}.

We give a final remark. 
In lemma \ref{theo_L-R_LIOM} we have shown that assuming the existence of LIOM, it follows that the L-R bound of a local Hamiltonian gives rise to a logarithmic lightcone. The converse, namely showing the existence of LIOM assuming a logarithmic lightcone, has not been proven so far. It is immediate to see how a naive approach to the problem fails. Let us consider the mutually commuting and conserved quantities labeled $\{ H_{j,\om} \} $ in equation \eqref{cons_comm}: 
\begin{align} \label{cons_j}
 H_{j,\om} := \lim_{T \rightarrow \infty} \frac{1}{T} \int_0^T dt e^{itH_\om} h_{j,\om} e^{-itH_\om}
\end{align} 
Now, let us assumed that the L-R bound of the system gives a logarithmic lightcone, like in equation \eqref{L-R-beta-res}. We would like to show that, after averaging over the disorder configurations $\om$, the conserved quantities $\{ H_{j,\om} \} $ decay exponentially. This corresponds to show that, with $X$ a connected region containing $j$, the quantity:
 \begin{equation} \label{decay1}
\Eom \, \| \frac{1}{2^{|X^c|}} \left( \Tr_{X^c} H_{j,\om} \right) \otimes \mathds{1}_{X^c} - H_{j,\om} \| 
\end{equation}
decays exponentially.

Let us replace in \eqref{decay1} the definition \eqref{cons_j}, and then use the logarithmic lightcone:
\begin{align}
 & \Eom \, \| \frac{1}{2^{|X^c|}} \left( \Tr_{X^c} H_{j,\om} \right) \otimes \mathds{1}_{X^c} - H_{j,\om} \| \\
 & = \Eom \, \| \lim_{T \rightarrow \infty} \frac{1}{T} \int_0^T dt \left( \frac{1}{2^{|X^c|}} \left( \Tr_{X^c} e^{itH_\om} h_{j,\om} e^{-itH_\om} \right) \otimes \mathds{1}_{X^c} - e^{itH_\om} h_{j,\om} e^{-itH_\om} \right \|  \label{al_last} \\
 & \le  \lim_{T \rightarrow \infty} \frac{1}{T} \int_0^T dt K   \, t^\beta  \, J \,  e^{-\frac{\dist(j,X^c)}{\mu} } \label{last}
\end{align}
We have assumed in \eqref{al_last} that the average can be exchanged with the limit $\lim_{T \rightarrow \infty}$.
The upper bound in \eqref{last} diverges for every $\beta > 0 $. Only for $ \beta = 0 $, that corresponds to what is known as dynamical localization for the Anderson model, \eqref{last} gives a bound exponentially decreasing with distance to \eqref{decay1}. But this shows a fact already know namely that the Anderson model admits a set of (in average) exponentially localized conserved quantities.

Showing the equivalence of the existence of LIOM and of the existence of a dynamics with L-R bounds giving a logarithmic lightcone is still an open problem.

We now proceed with the proof of  \eqref{L-R}.
\begin{proof}
In the following we mostly drop the index $ \om $, denoting the disorder configurations, to light up the notation.
The main idea is to divide the region around $ x=0 $, where $ A $ is supported into an ``inner'' region $ [-d,d] $ with $ d \le l $, and an ``outer'' region $ |r| > d $. $ d $ is a variational parameter that will be used to minimize the corresponding upper bound. At the end of the proof we will find that $ d = \frac{l}{2} $ showing the consistency of our approach. $ H_r $ in the outer region, when $ r > d $ will be replaced with the corresponding restriction in the region $ [1,L] $, and when $ r < -d $ on the region $ [-L,-1] $. These restricted operators will commute with $ A $ because they supports are disjoint. Operators $ H_r $ in the inner region will be replaced by operators supported on $ X = [-l,l] $.

$ [H_r,H_s]=0 $ implies $ U(t):= e^{-it \sum_r H_r } = \prod_r e^{-it H_r } $, and the factors $ e^{-it H_r } $ can be arranged in an arbitrarily order, therefore:
\begin{align}
U(t)^*AU(t)=  \prod_{|r|\le d} e^{it H_r} \prod_{|r|> d} e^{it H_r} A \prod_{|s|> d} e^{-it H_s} \prod_{|s|\le d} e^{-it H_s} 
\end{align}
For the terms $ H_s $ of the outer region we use the Duhamel formula \eqref{Duhamel} to write:
\begin{align} \label{Unitary_Duhamel}
 e^{-it H_s}=e^{-it \Ht_s} - i \int_0^t du e^{-iuH_s}(H_s-\Ht_s)e^{-i(t-u)\Ht_s}
\end{align}
With $ s > d $ we have defined
\begin{align}\label{Hstilde}
 \Ht_s := \frac{1}{2^{|\Lambda \setminus  [1,L]|}} \left( \Tr_{\Lambda \setminus [1,L]} H_s \right) \otimes \mathds{1}_{\Lambda \setminus [1,L]} 
\end{align}
This implies that $ \Ht_s $ is supported on $ [1,L] $, therefore commuting with $ A $. With $ s < -d $ an analogous operator is defined with support on $ [-L,-1] $, commuting with $ A $ as well.

\begin{figure}[h!]
\setlength{\unitlength}{1mm} 

\begin{picture}(145,55)(-30,-50)

\thicklines

\put(-15,-24){\line(0,1){4}}
\put(85,-24){\line(0,1){4}}
\put(-15,-22){\line(1,0){100}}

\put(35,-24){\line(0,1){4}}
\put(65,-24){\line(0,1){4}}
\put(5,-24){\line(0,1){4}}
\put(20,-24){\line(0,1){4}}
\put(50,-24){\line(0,1){4}}

\thinlines

\put(34.25,-28){$ 0 $}
\put(49.25,-28){$ d $}
\put(64.25,-28){$ l $}
\put(84.25,-28){$ L $}
\put(-16.5,-28){$ -L $}

\put(34.25,-31.5){$ \downarrow $}

\put(3.5,-28){$ -l $}
\put(18.5,-28){$ -d $}

\put(28,-36){$ \textrm{supp of} \, A $}

\put(20,-18){$\overbrace{\hspace{30mm}}$}
\put(25,-13){$ \textrm{inner region}$}

\put(5,-9){$\overbrace{\hspace{60mm}}$}
\put(21,-5){$ \textrm{supp of} \, \Hh_s, \, |s| \le d $}

\put(50,-31){$\underbrace{\hspace{35mm}}$}
\put(59,-37){$ \textrm{outer region}$}

\put(36,-41){$\underbrace{\hspace{49mm}}$}
\put(49,-47){$ \textrm{supp of} \, \Ht_s, \, s>d $}

\put(-15,-31){$\underbrace{\hspace{35mm}}$}
\put(-6,-37){$ \textrm{outer region}$}

\put(-16,-41){$\underbrace{\hspace{49mm}}$}
\put(-3,-47){$ \textrm{supp of} \, \Ht_s, \, s<-d $}



\end{picture}
\caption{Schematic representation of the subdivision in regions and of the supports of $\Hh_s$ and $\Ht_s$. } 
\label{supports}
\end{figure}

The second term on the RHS of \eqref{Unitary_Duhamel} is upper bounded in norm by $ t \| H_s - \Ht_s \| $, that, according to the definition \eqref{Hstilde}, and the assumption about the exponential decrease, after disorder averaging \eqref{decay}, of $ H_{r,\om} $, is in turn upper bounded by $ t J e^{-\frac{r}{\xi} } $. Let us now consider, with $ |r|>d $:
\begin{align}
  & e^{it H_r} A  e^{-it H_r}  = e^{it H_r} A \left( e^{-it \Ht_r} - i \int_0^t du e^{-iuH_r}(H_r-\Ht_r)e^{-i(t-u)\Ht_r} \right) \\
  & = e^{it H_r} e^{-it \Ht_r}  A - i e^{it H_r} A  \int_0^t du e^{-iuH_r}(H_r-\Ht_r)e^{-i(t-u)\Ht_r}  \\
  & = \left( e^{it \Ht_r} + i \int_0^t du e^{iuH_r}(H_r-\Ht_r)e^{i(t-u)\Ht_r} \right) e^{-it \Ht_r}  A  - i e^{it H_r} A  \int_0^t du e^{-iuH_r}(H_r-\Ht_r)e^{-i(t-u)\Ht_r}  \\
  & =: A + \delta_r(t) 
\end{align}
Where $ \delta_r(t) $ is an operator supported on the whole lattice $ \Lambda $ and in norm upper bounded by $ 2 t \|A\| J e^{-\frac{r}{\xi} } $. This implies that:
\begin{align}
 e^{it H_{r+1}} e^{it H_r} A  e^{-it H_r} e^{-it H_{r+1}} & =  e^{it H_{r+1}} A e^{-it H_{r+1}} + e^{it H_{r+1}} \delta_{r}(t) e^{-it H_{r+1}} \\ & = A + \delta_{r+1}(t) + e^{it H_{r+1}} \delta_{r}(t) e^{-it H_{r+1}} \label{12}
\end{align}
In \eqref{12} we have that $ \| \delta_{r+1}(t) + e^{it H_{r+1}} \delta_{r}(t) e^{-it H_{r+1}} \| \le 2 t \|A\| J ( e^{-\frac{r}{\xi} } + e^{-\frac{r+1}{\xi} } ) $.
We then have:
\begin{align} \label{def_delta}
  \prod_{|r|> d} e^{it H_r} A \prod_{|s|> d} e^{-it H_s} = A + \delta(t)
\end{align}
with 
\begin{align} \label{final_rest}
 \| \delta(t) \| \le 4 t \|A\| J  e^{-\frac{d+1}{\xi} } \sum_{j=0}^\infty  e^{-\frac{j}{\xi} } = 4 t \|A\| J  e^{-\frac{d}{\xi} } \frac{1}{e^{\frac{1}{\xi}}-1} \le 4 t \xi \|A\| J  e^{-\frac{d}{\xi} }  
\end{align}
We stress that the factor 2 in the upper bound \eqref{final_rest} arises from combining the contributions of the positive and negative $ r $'s in the definition \eqref{def_delta} of $ \delta(t) $.
We remark that the upper bound \eqref{final_rest} is obtained as a consequence of disorder averaging, as stressed after equation \eqref{Hstilde}.

We now consider $ \prod_{|r|\le d} e^{it H_r}  A  \prod_{|s|\le d} e^{-it H_s} $. The idea is to restrict the support of each $ H_r $ of the inner region, $ |r|\le d $, to $ X = [-l,l] $.  We define:
\begin{align}\label{Hhat}
 \Hh_r := \frac{1}{2^{|\Lambda \setminus  [-l,l]|}} \left( \Tr_{\Lambda \setminus [-l,l]} H_r \right) \otimes \mathds{1}_{\Lambda \setminus [-l,l]} 
\end{align}
At this point we proceed similarly to what we have done above for the outer region, we first consider:
\begin{align}
  &  e^{it H_r}  A e^{-it H_r}   = \label{inner_transform} \\
  & = \left( e^{it \Hh_r} + i \int_0^t du e^{iuH_r}(H_r-\Hh_r)e^{i(t-u)\Hh_r} \right) A \left( e^{-it \Hh_r} - i \int_0^t du e^{-iuH_r}(H_r-\Hh_r)e^{-i(t-u)\Hh_r} \right) \\
  & = e^{it \Hh_r} A e^{-it \Hh_r} - i e^{it \Hh_r} A \int_0^t du e^{-iuH_r}(H_r-\Hh_r)e^{-i(t-u)\Hh_r}  + i \int_0^t du e^{iuH_r}(H_r-\Hh_r)e^{i(t-u)\Hh_r} A e^{-it H_r} \\
  & =: e^{it \Hh_r} A e^{-it \Hh_r} + \eta_r(t)
\end{align}
The norm of $ \eta_r(t) $ is upper bounded as: $ \| \eta_r(t) \| \le 2 t\| A \| J e^{-\frac{l+1-r}{\xi} } $. This again follows from disorder averaging, according to the assumption \eqref{decay}.
\begin{align}
 & e^{it H_{r+1}} e^{it H_r} A  e^{-it H_r} e^{-it H_{r+1}}  = \\
 & = \left( e^{it \Hh_{r+1}} + i \int_0^t du e^{iuH_{r+1}}(H_{r+1}-\Hh_{r+1})e^{i(t-u)\Hh_{r+1}} \right) \left(e^{it \Hh_r} A e^{-it \Hh_r} + \eta_r(t) \right) \nonumber \\
 & \hspace{3.5cm} \left( e^{-it \Hh_{r+1}} - i \int_0^t du e^{-iuH_{r+1}}(H_{r+1}-\Hh_{r+1})e^{-i(t-u)\Hh_{r+1}} \right) \\
 & =  e^{it \Hh_{r+1}} \left(e^{it \Hh_r} A e^{-it \Hh_r} + \eta_r(t) \right) e^{-it \Hh_{r+1}} -i e^{it \Hh_{r+1}} e^{it H_r} A e^{-it H_r} \int_0^t du e^{-iuH_{r+1}}(H_{r+1}-\Hh_{r+1})e^{-i(t-u)\Hh_{r+1}} + \nonumber \\ 
 & + i \int_0^t du e^{iuH_{r+1}}(H_{r+1}-\Hh_{r+1})e^{i(t-u)\Hh_{r+1}} e^{it H_r} A 
 e^{-it H_r} e^{-it H_{r+1}} \\
 & =: e^{it \Hh_{r+1}} e^{it \Hh_r} A  e^{-it \Hh_r} e^{-it \Hh_{r+1}} + \eta_{r+1}(t) 
\end{align}
with  
\begin{align}
\| \eta_{r+1}(t) \| \le \|  \eta_{r}(t) \| + 2 t\| A \| J e^{-\frac{l+1-(r+1)}{\xi} } \le 2 t\| A \| J \left( e^{-\frac{l+1-r}{\xi} } + e^{-\frac{l+1-(r+1)}{\xi} } \right)
\end{align}
This implies that:
\begin{align}
 & \prod_{|r|\le d} e^{it H_r}  A  \prod_{|s|\le d} e^{-it H_s} = \\
 & = \prod_{|r|\le d} \left( e^{it \Hh_r} + i \int_0^t du e^{iuH_r}(H_r-\Hh_r)e^{i(t-u)\Hh_r} \right) A \prod_{|s|\le d} \left( e^{-it \Hh_s} - i \int_0^t du e^{-iuH_s}(H_s-\Hh_s)e^{-i(t-u)\Hh_s} \right) \\
 & =: \prod_{|r|\le d} e^{it \Hh_r} A  \prod_{|s|\le d}  e^{-it \Hh_s} + \eta(t)
\end{align}
with
\begin{align} 
 \| \eta(t) \| \le 4 t\| A \| J \sum_{j=0}^{d} e^{-\frac{l+1-j}{\xi} } = 4 t\| A \| J e^{-\frac{l+1}{\xi} } \sum_{j=0}^{d} e^{\frac{j}{\xi} } = 4 t\| A \| J e^{-\frac{l+1}{\xi} } \frac{e^{\frac{d+1}{\xi}}-1}{e^{\frac{1}{\xi}}-1} 
 \le 4 t \xi \| A \| J e^{-\frac{l-d}{\xi} } \label{inner_upper}
\end{align}
In \eqref{inner_upper} we see that with $ \xi \gg d $,  $ \frac{e^{\frac{d+1}{\xi}}-1}{e^{\frac{1}{\xi}}-1} \rightarrow d+1 $, therefore the final bound \eqref{inner_upper} is not good in this limit. Nevertheless the overall bound \eqref{L-R} contains a contribution from $ \|\delta(t)\| $, given in \eqref{final_rest}, that is proportional to $ \xi $, therefore to simplify the form of the final bound we stick to \eqref{inner_upper}.  
We are now ready to prove \eqref{L-R}.
\begin{align}
& \| \frac{1}{2^{|X^c|}} \left( \Tr_{X^c} U^*(t) A U(t) \right) \otimes \mathds{1}_{X^c} -  U^*(t) A U(t) \| = \\ 
& = \| \frac{1}{2^{|X^c|}} \left( \Tr_{X^c} \left( \prod_{|r|\le d} e^{it H_r} ( A + \delta(t) ) \prod_{|s|\le d} e^{-it H_s}   \right) \right) \otimes \mathds{1}_{X^c} -  \left( \prod_{|r|\le d} e^{it H_r} ( A + \delta(t) ) \prod_{|s|\le d} e^{-it H_s}   \right) \| \\
& = \| \prod_{|r|\le d} e^{it \widehat{H}_r} A  \prod_{|s|\le d} e^{-it \widehat{H}_s} + \frac{1}{2^{|X^c|}} \left( \Tr_{X^c} \left( \eta(t) + \prod_{|r|\le d} e^{it H_r} \delta(t) \prod_{|s|\le d} e^{-it H_s}   \right) \right) \otimes \mathds{1}_{X^c} + \\ 
&  \hspace{8cm} - \left( \prod_{|r|\le d} e^{it H_r} ( A + \delta(t) ) \prod_{|s|\le d} e^{-it H_s}   \right) \| \\
& = \|  \frac{1}{2^{|X^c|}} \left( \Tr_{X^c} \left( \eta(t) + \prod_{|r|\le d} e^{it H_r} \delta(t) \prod_{|s|\le d} e^{-it H_s}   \right) \right) \otimes \mathds{1}_{X^c} -  \left( \eta(t) + \prod_{|r|\le d} e^{it H_r}  \delta(t)  \prod_{|s|\le d} e^{-it H_s}   \right) \| \label{33} \\
& \le 2 ( \| \eta(t) \| + \| \delta(t) \| ) \le 8 t \xi \| A \| J \left( e^{-\frac{d}{\xi} }  + e^{-\frac{l-d}{\xi} } \right) \le 16 \, t \, \xi \, \| A \| \, J \, e^{-\frac{l}{2\xi} } \label{34}
\end{align}
In \eqref{33} we have used \eqref{Haar}, see also proposition 1 of \cite{Rastegin_2012}. In \eqref{34} we have minimized the upper bound with respect to $ d $, finding that the minimum is reached for $ d=\frac{l}{2} $. 
\end{proof}



\section{Dynamical generation of entanglement entropy in the LIOM model} \label{EE_from_LIOM}

We want to evaluate $ \Delta \, S(t) := \Eom S\left( \Tr_{[1,L]}  e^{-it\sum_r H_{r,\om}} \rho e^{it\sum_r H_{r,\om}} \right) - S\left( \Tr_{[1,L]} \rho  \right) $, with $ S(\cdot) $ the von Neumann entropy of a density matrix, and $ \rho $  representing the initial state of the system. $ \Delta \, S(t) $ evaluates the generation of  entanglement entropy due to the dynamics of the Hamiltonian \eqref{LIOM}. Numerical and theoretical works  \cite{Znidaric_2008,Bardarson_2012,Serbyn_2013,Kim_2014}, and more recent rigorous work, under the assumption of the LIOM model \cite{Nach_2021,Zeng_2023,Lu_2024}, have shown that the long-time dynamics of the entanglement entropy of MBL systems follows a $ \log t $ law. Within the LIOM model this can be traced back to the slow spread in time of the support of operators as shown in \eqref{L-R}. This is in contrast to the growth in time proportional to $ t $ of the entanglement entropy in generic systems, see \cite{Bravyi_Hastings_Verstraete_2006, Eisert_Osborne_2006, Van_Acoleyen_2013, Marien_2016} and in particular \cite{Toniolo_2024_2} where a linear in time upper bound to the growth of $ \alpha $-Rényi entropies, with $ \alpha \le 1 $, is provided.

It should also be noted that when a large part of the system is traced out, instead of half of it like we are considering here, the  typical behaviour of an initial random pure state under a generic local Hamiltonian dynamics is such that its entanglement entropy grows at very low rates, as proven in \cite{Hutter_2012}.

The following  upper bound \eqref{EE} on $  \Delta \, S(t) $, starting from a generic product state, depends logarithmically on the system's size. As we have seen in theorem \ref{Renyi_from_L-R} of this work, and in other recent results in the literature \cite{Zeng_2023,Lu_2024}, such dependence on the system's size is absent.  In fact lemma \ref{theo_L-R_LIOM} together with theorem \ref{Renyi_from_L-R} implies an upper bound to the dynamically generated von Neumann entropy starting from a product state that for large time grows like $\log t $.  We still report our result because it is obtained in an easier fashion, without recurring to a telescopic sum that is employed by the works that we have mentioned and also by our work \cite{Toniolo_2024_2} upon which theorem \ref{Renyi_from_L-R} is based. The proof is easier but we pay a price. We also stress that the use of telescopic sums for Hamiltonians with exponential decay of interactions, like the LIOM Hamiltonian of equation \eqref{LIOM} is more difficult than in the strictly local case. See for example section V and in particular appendix G of \cite{Toniolo_2024_2}. 
%

\begin{Lemma} \label{lem_EE_from_LIOM}
Given the LIOM Hamiltonian $ H_\om = \sum_{r=-L}^L H_{r,\om} $ defined in \ref{LIOM_Ham} and any initial product state $ \rho=\otimes_{j=-L}^L \rho_j $, with $ \rho_j : \mathds{C}^2 \rightarrow \mathds{C}^2 $, defining $ {\bar t} = 2\xi \ln(L+1)/(J(2\xi \ln(L+1)+1)^2) $,  it holds:
\begin{equation} \label{EE}
\Delta S(t) := \Eom S\left( \Tr_{[1,L]}  e^{-itH_\om} \rho e^{itH_\om} \right) - S\left( \Tr_{[1,L]} \rho  \right) \le
\begin{cases}
  4tJ \xi  + t J (2\xi \ln(L+1)+1)^2 +3  \hspace{5mm} \textrm{with} \hspace{5mm} t \le {\bar t} \\
  2 \xi \ln(2tJ(L+1)) + 2 \xi +3 \hspace{5mm} \textrm{with} \hspace{5mm} t \ge {\bar t}
\end{cases}
\end{equation}
\end{Lemma}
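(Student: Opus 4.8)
The plan is to derive a single master inequality bounding $\Delta S(t)$ by the sum of (i) the entanglement that the dynamics can generate inside a window $X=[-l,l]$ straddling the cut and (ii) the error committed by confining the dynamics to that window, and then to choose the window width $l$ differently in the two time regimes. The only dynamical inputs are the commutativity $[H_r,H_s]=0$, the exponential tails \eqref{decay}, and the operator-spreading bound \eqref{L-R} of Lemma \ref{theo_L-R_LIOM}; this is what makes the argument ``easier'' than the telescopic one used for the R\'enyi entropies.

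For the confinement step I would factorise $U(t)=\prod_r e^{-itH_r}$ and, exactly as in the proof of Lemma \ref{theo_L-R_LIOM}, use the Duhamel identity together with \eqref{decay} to replace $U(t)$ by $\tilde U(t)=U_{\mathrm L}\,U_X\,U_{\mathrm R}$, where $U_X$ is supported on $X=[-l,l]$ and $U_{\mathrm L},U_{\mathrm R}$ are supported on the left and right halves. Splitting near/far terms at $d=l/2$, the exponential tails make the lattice-wide sum of discarded contributions converge, giving $\tfrac12\|U\rho U^*-\tilde U\rho\tilde U^*\|_1\le\|U-\tilde U\|\le \epsilon(l)\lesssim tJ\xi\,e^{-l/(2\xi)}$. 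Since $U_{\mathrm L}$ and $U_{\mathrm R}$ act on a single side of the cut they leave $S(\Tr_{[1,L]}\cdot)$ invariant and may be dropped, so that the reduced state of $\tilde U\rho\tilde U^*$ differs from $\Tr_{[1,L]}\rho$ only through $U_X$ acting on the product state $\rho$. Because the smaller side of $X$ contains $l$ qubits this contributes at most $l\ln2\le l$ to the entropy. Controlling the replacement error with the $\alpha\to1$ limit of \eqref{Audenaert_Datta} (Fannes--Audenaert continuity) on the left half, of dimension $2^{L+1}$, yields the master bound $\Delta S(t)\le l\ln2+\epsilon(l)\,(L+1)\ln2+\ln2$.

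At long times $t\ge\bar t$ I would optimise this over $l$: balancing $l\ln2$ against $\epsilon(l)(L+1)\ln2$ gives $l\simeq 2\xi\ln\!\big(tJ(L+1)\big)$, the error term collapses to $O(\xi)$, and bounding $\ln2\le1$ produces the logarithmic branch $2\xi\ln(2tJ(L+1))+2\xi+3$; this is exactly the $\log t$--law, with the system size entering only through $\ln(L+1)$. At short times $t\le\bar t$ the offset $l\ln2$ makes the master bound wasteful, since it does not vanish as $t\to0$, so there I would instead bound the entanglement \emph{rate}. Truncating the cut-coupling to the minimal window $w_0\simeq 2\xi\ln(L+1)+1$ chosen so that the lattice-wide discarded tail $\sim (L+1)e^{-w_0/(2\xi)}$ is $O(1)$, a small-incremental-entangling estimate gives $\dfrac{dS}{dt}\lesssim J w_0^2$, one factor $w_0$ coming from the in-window Hamiltonian norm $\sim Jw_0$ and one from $\ln(\dim)\sim w_0\ln2$, plus a convergent $\sim J\xi$ tail contribution; integrating yields the linear branch $4tJ\xi+tJ(2\xi\ln(L+1)+1)^2+3$. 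The two branches are designed to meet at $\bar t$, since $\bar t J w_0^2=2\xi\ln(L+1)\approx w_0$ is precisely where linear growth reaches the window width.

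The main obstacle is suppressing the $L$--dependence. Both the dimension factor $2^{L+1}$ in the continuity bound and the lattice-wide sums of the exponential tails threaten to produce terms linear in $L$; the whole point of the argument is that the tails sum to $O(\xi)$ rather than $O(L)$, and that the logarithmically growing window, or at short times the width $w_0=2\xi\ln(L+1)$, converts the surviving $2^{L+1}$ into a mere $\ln(L+1)$ inside the bound. Getting the two estimates to match cleanly at $\bar t$, and keeping the constants honest while repeatedly using $\ln2\le1$, is the delicate bookkeeping; the genuinely $L$--independent statement is the one later obtained for the R\'enyi entropies, and the residual $\ln(L+1)$ here is the price paid for avoiding telescopic sums.
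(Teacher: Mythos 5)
Your proposal follows essentially the same route as the paper's proof: the same commutativity-based factorisation of $U(t)$ into inner and outer factors truncated via Duhamel and the tails \eqref{decay}, the same application of Fannes--Audenaert continuity carrying the $(L+1)$ dimension factor, the same product-state localisation of the entropy difference to the window, and the same two-regime choice of window width ($\propto \xi\ln(L+1)$ at short times, $\propto \xi\ln\big(tJ(L+1)\big)$ at long times), so your master bound and its optimisation reproduce the structure of \eqref{upper_short} and \eqref{upper_long}. The one substantive deviation is the short-time branch: you invoke a small-incremental-entangling rate bound, whereas the paper simply estimates $\tfrac12\|U_{[-2l,2l]}\rho_{[-2l,2l]}U_{[-2l,2l]}^*-\rho_{[-2l,2l]}\|_1\le t\,\|\sum_{|r|\le l}{\bar H}_r\|\le tJ(2l+1)$ and applies Fannes--Audenaert once more, as in \eqref{short_times}. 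The two give the same $tJ\times(\text{window width})^2$ scaling, but note that for a mixed global state the rate bound requires purification plus the ancilla-assisted version of the SIE theorem and carries an order-one constant $c>1$, so as written your linear branch would not recover the constant $1$ in front of $tJ(2\xi\ln(L+1)+1)^2$ in the lemma; the paper's integrated (Fannes plus trace-distance) version is both more elementary and exact on this point.
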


Before the proof we state some remarks.

As we said about the Lieb-Robinson bound \eqref{L-R}, in the limit $ \xi \rightarrow 0 $ the terms $ H_{r,\om} $  of the Hamiltonian becomes, in average, strictly local, together with the assumption that $ [H_{r,\om},H_{s,\om}]=0 $ this implies that the dynamics becomes trivial. Equation \eqref{EE} agrees with this picture, in fact with $ \xi \rightarrow 0 $ we see that $ \bar{t} \rightarrow 0 $, and, with $ t \ge 0 $, $ \xi \rightarrow 0 $ implies $ \Delta S(t) \le 3 $, that is time independent. With $ \xi \rightarrow 0 $ the Hamiltonian becomes (in average) a sum of single-site terms.  The authors of \cite{Nach_2016} found out that for the XY model with random magnetic field the bound on the dynamical generation of entanglement entropy is uniform in time, namely a constant. We stress that in their approach they take the supremum with respect to $ t $ before averaging, that is the correct way of studying dynamical localization in the context of Anderson localized systems. Our approach is different.

The term proportional to $ \ln \, L $ in the upper bound \eqref{EE}, for $ t \ge \bar{t} $,
  as far as regards the system's sizes currently available for experiments or simulations ends up being merely a term of order $ 1 $.

\begin{proof} 
As in the proof of lemma \ref{theo_L-R_LIOM} we drop the disorder index $ \om $, but recall when quantities are given by an average.
The main idea, the same employed in the evaluation of the Lieb-Robinson bound, is to identify an inner region (that is across $ x=0 $), and its complement, called outer region. The factors of $ U(t)=e^{-it\sum_{r=-L}^L H_r} = \prod_{r=-L}^L e^{-it H_r} $ centered in the outer region, meaning with $ r $ in such region, will be approximated with factors that are not supported across the origin, therefore they do not contribute to the entanglement entropy. The factors within the inner region will be approximated with factors strictly supported on a region containing the inner region. Both the approximations, for the factors in the inner and outer region, will lead to density matrices exponentially close in trace norm to the starting ones. To bound the error in the entanglement entropy we will use the Fannes-Audenaert-Petz bound, see \cite{Audenaert_2007} and lemma 1 of \cite{Winter_tight_2016}. The contribution to the entanglement entropy of the factors strictly supported on the inner region will be upper bounded in two different ways corresponding to different time scales. For the short time scale we will employ again the Fannes-Audenaert-Petz bound obtaining an upper bound to the entanglement entropy linear in $t$. For the long time scale we will use the trivial bound to the von Neumann entropy of a density matrix that, in our setting is the number of sites where the density matrix is supported. This will give rise to an upper bound of the type $ \ln J \,t $.

Let us see the details now. We define the inner region to be $ [-l,l] $ (this is a different notation with respect to the proof of \ref{theo_L-R_LIOM}), then define 
\begin{align}
 \rho_l(t):= \prod_{|r|\le l} e^{-itH_r} \rho \prod_{|s|\le l} e^{itH_s}
\end{align}
therefore $ \rho(t) = \prod_{|r| > l} e^{-itH_r} \rho_l(t) \prod_{|s| > l} e^{itH_s} $. As in the proof of the Lieb-Robinson bound we introduce, for $ r > l $, $ \Ht_r $ supported on $[1,L]$, such that:
\begin{align}\label{Htilde}
 \Ht_r := \frac{1}{2^{|\Lambda \setminus  [1,L]|}} \left( \Tr_{\Lambda \setminus [1,L]} H_r \right) \otimes \mathds{1}_{\Lambda \setminus [1,L]} 
\end{align}
Then
\begin{align} 
 e^{-it H_r}=e^{-it \Ht_r} - i \int_0^t du e^{-iuH_r}(H_r-\Ht_r)e^{-i(t-u)\Ht_r} =: e^{-it \Ht_r} + u_r(t) 
\end{align}
with $ \| u_r(t) \| \le tJe^{-\frac{r}{\xi}} $, in the averaged sense from \eqref{decay}.  Analogously we define $ \Ht_r $ for $ r < -l $ supported on $ [-L,-1] $. Then:
\begin{align} \label{product_rest}
 \prod_{|r| > l} e^{-itH_r} =: \prod_{|r| > l} e^{-it\Ht_r} + R(t)
\end{align} 
To evaluate the norm of $ R(t) $ in \eqref{product_rest}, we keep in mind that both sides of \eqref{product_rest} are unitary operators. With $ r > l $, it is
\begin{align}
& e^{-itH_r} e^{-itH_{r+1}} = \left(  e^{-it\Ht_r} + u_r(t) \right) \left(  e^{-it\Ht_{r+1}} + u_{r+1}(t) \right) \\
& = e^{-it\Ht_r}  e^{-it\Ht_{r+1}} + u_r(t)  e^{-it\Ht_{r+1}} + e^{-it H_r} u_{r+1}(t) \label{41}
\end{align}
The norm of the last term in \eqref{41} is upper bounded by $ tJ  e^{-\frac{r+1}{\xi}} $.
It is easy to conclude that, having defined $ R(t) $ in \eqref{product_rest}:
\begin{align} \label{norm_R}
 \|R(t)\| \le 2tJ\sum_{j=l+1}^L e^{-\frac{j}{\xi}} \le 2tJ e^{-\frac{l+1}{\xi}} \sum_{k=0}^\infty e^{-\frac{k}{\xi}} \le 2tJ e^{-\frac{l+1}{\xi}} \frac{1}{1-e^{-\frac{1}{\xi}}} \le 2tJ \xi e^{-\frac{l}{\xi}}
\end{align}
We stress that the factor 2 in the upper bound \eqref{norm_R} arises from bringing together the contributions of the positive and negative $ r $'s in the definition of $ R(t) $.
\begin{align}
 \rho(t) = \left( \prod_{|r| > l} e^{-it\Ht_r} + R(t) \right) \rho_l(t) \left( \prod_{|s| > l} e^{it\Ht_s} + R^*(t) \right) = : \prod_{|r| > l} e^{-it\Ht_r}  \rho_l(t)  \prod_{|s| > l} e^{it\Ht_s} + \widetilde{\rho}(t)
\end{align}
\begin{align}
 \| \widetilde{\rho}(t) \|_1  \le \| \prod_{|r| > l} e^{-it\Ht_r} \rho_l(t) R^*(t) \|_1 + \| R(t) \rho_l(t) \left( \prod_{|s| > l} e^{it\Ht_s} + R^*(t) \right) \|_1 \le 2 \|R(t)\| \| \rho_l(t) \|_1 \le 4tJ \xi e^{-\frac{l}{\xi}}
\end{align}
Denoting $ \eta $ and $ \sigma $ two generic density matrices over $ \mathds{C}^d $, and $ T $ an upper bound on their trace distance, $ \frac{1}{2} \|\eta-\sigma\|_1  \le T \le 1 $, the Fannes-Audenaert-Petz bound \cite{Audenaert_2007,Winter_tight_2016} reads:
\begin{equation} \label{Fannes}
  |S(\eta)-S(\sigma)| \le 
\begin{cases}
T \log_2 (d-1) + H_2(T,1-T) & \textrm{with} \, T \le 1-\frac{1}{d} \\
\log_2 d & \textrm{with} \, T \ge 1 - \frac{1}{d}
\end{cases}
\end{equation}
$ H_2(T,1-T) := -T\log_2 T -(1-T) \log_2 (1-T) $ is the binary Shannon entropy. It is $ H_2(T,1-T)  \le 1 $. Applying the bound \eqref{Fannes}, we get:
\begin{align}
 & S\left( \Tr_{[1,L]}  e^{-itH} \rho e^{itH} \right) = S\left[ \Tr_{[1,L]} \left( \prod_{|r| > l} e^{-it\Ht_r}  \rho_l(t)  \prod_{|s| > l} e^{it\Ht_s} + \widetilde{\rho}(t) \right) \right] - S\left( \Tr_{[1,L]}   \prod_{|r| > l} e^{-it\Ht_r}  \rho_l(t)  \prod_{|s| > l} e^{it\Ht_s}  \right) + \nonumber  \\ & \hspace{7cm} + S\left( \Tr_{[1,L]}   \prod_{|r| > l} e^{-it\Ht_r}  \rho_l(t)  \prod_{|s| > l} e^{it\Ht_s}  \right) \label{46} \\
 & \le 2tJ \xi e^{-\frac{l}{\xi}}(L+1) + 1 +  S\left( \Tr_{[1,L]}   \rho_l(t)  \right) \label{47}
\end{align}
To obtain \eqref{47} we have used, applying \eqref{Fannes}, that $ \| \Tr_X(\eta-\sigma) \|_1 \le \| \eta-\sigma \|_1 $, and the fact that each $ \Ht_r $ is not supported across $ x=0 $ therefore it does affect the von Neumann entropy in \eqref{46}. This last fact is shown, for example, in the equations (6)-(13) of our work \cite{Toniolo_2024_2}. We have also upper bounded the binary Shannon entropy with $ 1 $.
We can anticipate that after the minimization in $ l $ the factor $ L $ in \eqref{47} will be replaced by $ \ln L $.

We recall that we have defined $ \rho_l(t):= \prod_{|r|\le l} e^{-itH_r} \rho \prod_{|s|\le l} e^{itH_s} $, this means that despite the exponential decrease of the tails of the Hamiltonian terms $ H_r $, each $ e^{-itH_r} $ is supported on the all lattice $ \Lambda = [-L,L] \cap \mathds{Z}$. As stated at the beginning of the proof our strategy is to replace each $ H_r $, being $ |r|\le l $, with the truncation of $ H_r $ to the interval $ [-2l,2l] $, denoted $ {\bar H}_r $. This introduces an error in norm at most (for $r=l$) of the order of $ e^{-\frac{l}{\xi}} $. Following closely what done in section \ref{sub_L-R} after \eqref{inner_transform}, we have, replacing in equation \eqref{inner_upper} $ d $ with $ l $, and $ l $ with $ 2l $, that:
\begin{align}
 & \rho_l(t):= \prod_{|r|\le l} e^{-it H_r}  \rho  \prod_{|s|\le l} e^{it H_s} =: \prod_{|r|\le l} e^{-it {\bar H}_r} \rho  \prod_{|s|\le l}  e^{it {\bar H}_s} + {\widetilde{\widetilde{\rho}}}(t)
\end{align}
with
\begin{align}
 \| {\widetilde{\widetilde{\rho}}}(t) \|_1  \le 4 t J \xi  e^{-\frac{l}{\xi} } 
\end{align}
Applying again the Fannes-Audenaert-Petz bound, we have:
\begin{align}
S\left( \Tr_{[1,L]}  e^{-itH} \rho e^{itH} \right) \le 4tJ \xi e^{-\frac{l}{\xi}}(L+1) + 2 + S\left( \Tr_{[1,L]}   \prod_{|r|\le l} e^{-it {\bar H}_r}  \rho  \prod_{|s|\le l} e^{it {\bar H}_s}  \right) 
\end{align}
with $ {\bar H}_r $, supported on $ [-2l,2l] $, defined as in \eqref{Hhat} but with $l$ replaced by $2l$:

\begin{align}\label{Hbar}
 {\bar H}_r := \frac{1}{2^{|\Lambda \setminus  [-2l,2l]|}} \left( \Tr_{\Lambda \setminus [-2l,2l]} H_r \right) \otimes \mathds{1}_{\Lambda \setminus [-2l,2l]} 
\end{align}

We stress that till this point the initial state $ \rho $ is completely generic. We now make the assumption that $ \rho $ is a product state, namely: $ \rho = \otimes_{j=-L}^L \rho_j $. Denoting $ U_{[-2l,2l]}:= \prod_{|r|\le l} e^{-it {\bar H}_r} $, we have that:
\begin{align} \label{product_state}
 & \Tr_{[1,L]} \left( U_{[-2l,2l]} \rho U_{[-2l,2l]}^* \right) = \Tr_{[1,2l]}  \Tr_{[2l+1,L]}   \left( U_{[-2l,2l]} \rho_{[-2l,2l]} U^*_{[-2l,2l]} \otimes \rho_{[-2l,2l]^c} \right) = \nonumber \\ 
 &= \Tr_{[1,2l]} \left( U_{[-2l,2l]} \rho_{[-2l,2l]} U^*_{[-2l,2l]} \right) \otimes \Tr_{[2l+1,L]} \rho_{[-2l,2l]^c} 
\end{align}
It follows that:
\begin{align}
& S\left( \Tr_{[1,L]} U_{[-2l,2l]} \rho U_{[-2l,2l]}^* \right) - S\left( \Tr_{[1,L]} \rho \right) \\
 & = S \left( \Tr_{[1,2l]} \left( U_{[-2l,2l]} \rho_{[-2l,2l]} U_{[-2l,2l]}^* \right) \otimes \Tr_{[2l+1,L]} \rho_{[-2l,2l]^c} \right) - S \left( \Tr_{[1,2l]} \left( \rho_{[-2l,2l]} \right) \otimes \Tr_{[2l+1,L]} \rho_{[-2l,2l]^c} \right) \\
 & = 
  S \left( \Tr_{[1,2l]} \left( U_{[-2l,2l]} \rho_{[-2l,2l]} U_{[-2l,2l]}^* \right)  \right) - S \left( \Tr_{[1,2l]} \left( \rho_{[-2l,2l]} \right) \right) \label{S_prod}
\end{align}
In \eqref{S_prod} to cancel the system-size large contribution coming from $ S \left( \Tr_{[2l+1,L]} \rho_{[-2l,2l]^c} \right) $ we have used $ S(\sigma \otimes \eta) = S(\sigma) + S(\eta) $.
To obtain the final upper bound on $ S\left( \Tr_{[1,L]}  e^{itH} \rho e^{-itH} \right) - S\left( \Tr_{[1,L]} \rho \right) $ we need to upper bound \eqref{S_prod}.
This can be done in two different ways, that will correspond to two different time-scales. For short-time scales we use the Fannes-Audenaert-Petz upper bound, giving:
\begin{align}
 & S \left( \Tr_{[1,2l]} \left( U_{[-2l,2l]} \rho_{[-2l,2l]} U_{[-2l,2l]}^* \right)  \right) - S \left( \Tr_{[1,2l]} \left( \rho_{[-2l,2l]} \right) \right)  \le \frac{1}{2} \|U_{[-2l,2l]} \rho_{[-2l,2l]} U_{[-2l,2l]}^* - \rho_{[-2l,2l]} \|_1 (2l+1) +1 \nonumber \\ 
 & \le t \|\sum_{|r|\le l} {\bar H}_r \| (2l+1) +1 \le t J (2l+1)^2 +1 \label{short_times}
\end{align}
This bound must be compared with the trivial bound for the von Neumann entropy of a state supported on $ 2l+1 $ sites with dimension of the local Hilbert space equal to $ 2 $, that is $ 2l+1 $. This implies that the bound \eqref{short_times} is meaningful for $ t $ such that $ t J (2l+1)^2 +1 \le 2l+1 $, that is: $ t \le 2l/(J(2l+1)^2) $.
Within the short time-scale $ 0 \le t \le  2l/(J(2l+1)^2) $ we perform the minimization with respect to $ l $ of the upper bound
\begin{align} \label{upper_short}
 \Delta S(t) \le 4tJ \xi e^{-\frac{l}{\xi}}(L+1) + t J (2l+1)^2 +3
\end{align}
Since the RHS of the upper bound  \eqref{upper_short} is well defined also with $ l $ real positive, we evaluate the minimum in the standard way for real-valued functions. 
This turns out to be given by the solution of $ e^{\frac{l_{\textrm{min}}}{\xi}}(2l_{\textrm{min}}+1)=L+1 $. It is $ l_{\textrm{min}} < \xi \ln(L+1) $, then since the RHS of \eqref{upper_short} has a unique minimum, replacing $ l= \xi \ln(L+1) $ we get an upper bound to the minimal value of $ \Delta S(t) $. Therefore with $ t \le {\bar t} := 2\xi \ln(L+1)/(J(2\xi \ln(L+1)+1)^2) $ the increase of entanglement entropy is linear in time:
\begin{align} \label{upper_short_final}
 \Delta S(t) \le 4tJ \xi  + t J (2\xi \ln(L+1)+1)^2 +3
\end{align}
With $ t \ge {\bar t} $, we upper bound \eqref{S_prod} with the trivial bound $ 2l +1 $
\begin{align} \label{upper_long}
 \Delta S(t) \le 4tJ \xi e^{-\frac{l}{\xi}}(L+1) + 2l +3
\end{align}
After minimization in $ l $ the minimum $ l_m $ turns out time dependent: $ l_m= \xi \ln (2tJ(L+1)) $, leading to
\begin{align} \label{upper_long_final}
 \Delta S(t) \le 2 \xi \ln(2tJ(L+1)) + 2 \xi +3
\end{align}
This completes the proof.
\end{proof}

\section{Discussion of our results} \label{Discussion}

We discuss and compare our two results given by the bounds in equations \eqref{delta_Renyi_time}, and equation \eqref{EE} for $ t \ge \bar{t} $.

The upper bound \eqref{delta_Renyi_time} is novel because it relies only on the assumption of a logarithmic lightcone \eqref{L-R-beta} without assuming the existence of LIOM (see also the independent work \cite{Zeng_2023}), moreover it is obtained as a result of the general theory developed by us in \cite{Toniolo_2024_2}. Equation  \eqref{delta_Renyi_time} also generalizes the $ \ln \, t$-law to a set of entropies that upper bound the von Neumann entropy, in fact  $ \alpha$-Rényi entropies are decreasing in $\alpha $ and the von Neumann entropy is obtained as the limit $ \alpha \rightarrow 1 $. A shortcoming of our bound \eqref{delta_Renyi_time} is the fact that in the case $ \beta = 0 $ in equation \eqref{L-R-beta}, that corresponds to Anderson-type systems, we are not able to reproduce the results of \cite{Nach_2016} that have found an upper bound that does not depend on $ t $. The dependence of our bound on $ \beta +2 $ can be traced back to the two time-integrations performed in \eqref{def_Delta} and in \eqref{bound_sum_1}.

The upper bound of equation \eqref{EE} suffers from a $\log$-like dependence on the system's size. We think that such dependence could be improved using in the proof of \eqref{EE} a telescopic sum.  We have done it in our work \cite{Toniolo_2024_2}, see section V and in particular appendix G, where we have shown that a generic local Hamiltonian with exponential decay of interactions, leading to a L-R with a linear lightcone, gives dynamical $\alpha$-Rényi entropies (with $0 \le \alpha \le 1$) growing at most linearly in time. But we like to stress that this turns out to be quite difficult.   In Lemma \ref{theo_L-R_LIOM} of section \ref{sub_L-R} we have proven that from the LIOM Hamiltonian, whose terms decay with localization length $ \xi $, follows a logarithm lightcone with localization lenght $ 2\xi $. This means that in the comparison of the results from Lemma \ref{lem_EE_from_LIOM} and \eqref{delta_Renyi_time} we need to identify $ 2\xi $  and $ \mu $, and to consider $ \beta=1 $, $ \alpha=1 $. The scaling with $ \ln (Jt) $ of \eqref{EE}, for $ t\ge \bar{t} $, is $ 2\xi $, namely $ \mu $. This is better than that of \eqref{delta_Renyi_time}, that is $ 3\mu $.

We finally like to stress that Lieb-Robinson bounds, that quantify the dynamical spreading of local operators, may be easier to measure in experiments in comparison to global quantities such as entanglement. As we have proven in theorem \ref{Renyi_from_L-R}, the $ \ln \, t$-law for the entanglement entropy actually follows from Lieb-Robinson bounds with a logarithmic lightcone.

\section*{Acknowledgments}
 Daniele Toniolo and Sougato Bose acknowledge support from UKRI grant EP/R029075/1. D. T. gladly acknowledges discussions about topics related to this work with Abolfazl Bayat, George McArdle, Igor Lerner, Lluís Masanes and Roopayan Ghosh.

\section*{Appendices}
\appendix

\section{Duhamel Identities}

\begin{align}  \label{Duhamel}
 e^{itA}-e^{itB} & = i \int_0^t e^{isA}(A-B)e^{i(t-s)B}ds = i \int_0^t e^{i(t-s)B}(A-B)e^{isA}ds
\end{align}
Proof of the first equality
\begin{align}
 e^{itA}-e^{itB} =  \int_0^t \frac{d}{ds} \left( e^{isA} e^{i(t-s)B} \right) ds = i\int_0^t  e^{isA}(A-B) e^{i(t-s)B}ds 
\end{align}
Proof of the second equality
\begin{align}
 e^{itA}-e^{itB} =  \int_0^t \frac{d}{ds} \left(e^{i(t-s)B} e^{isA} \right) ds = i\int_0^t e^{i(t-s)B} (A-B) e^{isA} ds 
\end{align}


%



%

%


\bibliography{bibliography_MBL_L-R_Bounds}

\end{document}